\theoremstyle{plain}
\newtheorem{thm}{THEOREM}
\newtheorem{lem}{LEMMA}
\theoremstyle{definition}
\newcommand{\A}{{\mathsf A}}
\newcommand{\B}{{\mathsf B}}
\newcommand{\sA}{{\mathsf a}}
\newcommand{\sB}{{\mathsf b}}
\newcommand{\C}{{\mathbb C}}
\newcommand{\M}{{\mathsf M}}
\newcommand{\W}{{\mathsf W}}
\newcommand{\N}{{\mathbb N}}
\newcommand{\R}{{\mathbb R}}
\newcommand{\s}{{\mathcal S}}
\newcommand{\Tr}{{\rm Tr\, }}
\newcommand{\id}{{\mathds 1}}
\newcommand{\la}{{\lambda}}
\newcommand{\F}{{\mathcal{F}}}
\begin{document}

\title{Further implications of the Bessis-Moussa-Villani conjecture}
\author{\vspace{5pt} Elliott H.~Lieb$^{*}$ and Robert Seiringer$^{
\dagger}$\\
\vspace{-4pt}\small{$^{*}$Departments of Mathematics and Physics,
Jadwin Hall,} \\
\small{Princeton University, P.~O.~Box 708, Princeton, New Jersey
  08544, USA}\\ \vspace{-4pt}\small{$^{\dagger}$Department of Mathematics and Statistics, McGill University,}\\ \small{805 Sherbrooke St. West, Montreal QC H3A0B9, Canada}}
\date{\small June 3, 2012}

\maketitle

\renewcommand{\thefootnote}{$ $}
\footnotetext{\copyright\, 2012 by the authors. This paper may be reproduced, in its
entirety, for non-commercial purposes.}

\begin{abstract}
  We find further implications of the BMV conjecture, which states
  that for hermitian matrices $\A$ and $\B$, the function
  $\lambda\mapsto\Tr \exp(\A -\la \B)$ is the Laplace transform of a
  positive measure.
\end{abstract}

\bigskip
Bessis, Moussa and Villani conjectured \cite{bessis} that for hermitian matrices $\A$
and $\B$, $\lambda\mapsto\Tr \exp(\A -\la \B)$ is the Laplace transform
of a positive measure. Previously we showed that this property is
equivalent to several other trace inequalities, most notably that the
polynomials $\F_p(\lambda) = \Tr (\A + \lambda \B)^p$ for $p\in\N$ all have positive coefficients
when $\A$ and $\B$ are positive. Since a proof of the BMV conjecture has
recently been put forward by H. Stahl \cite{stahl}, it seems
worthwhile to find other implications.

Here we prove two things. Our first result is that the function $\lambda \in \R_+\mapsto
\Tr (\A+\lambda\B)^p$ for general $p>0$ has positive derivatives up to order $\lceil p\rceil$,
the largest integer not less than $p$. Moreover, taking further
derivatives, one obtains alternating signs for the derivatives. Our
second result is that our earlier theorem on $\F_p(\lambda)$ has a generalization from
sums of eigenvalues to elementary symmetric functions of eigenvalues.

\section{Our previous results}

The following theorem was proved in \cite{LS}.

\begin{thm}\label{T1}
For fixed $n$ 
let $\A$ and $\B$ denote arbitrary hermitian $n\times n$ matrices over $\C$,
and let $\lambda\in\R$. 
The following statements are equivalent:
\begin{itemize}
\item[(i)] For all $\A$ and $\B$ positive, and all $p\in\N$, the polynomial
$\lambda\mapsto\Tr(\A+\lambda\B)^p$ has only non-negative coefficients.
\item[(ii)] For all $\A$ hermitian and $\B$ positive, $\lambda\mapsto\Tr
\exp{(\A-\lambda\B)}$ is the Laplace transform of a positive
measure supported in $[0,\infty)$.
\item[(iii)] For all $\A$ positive definite and $\B$ positive,
and all $p \geq 0$,  $\lambda\mapsto \Tr (\A+\lambda \B)^{-p}$ is the
Laplace transform of a positive measure supported in $[0,\infty)$.
\end{itemize}
\end{thm}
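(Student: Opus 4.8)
The three conditions are naturally proved equivalent as a cycle, and it is convenient to separate the ``analytic'' equivalence $(ii)\Leftrightarrow(iii)$ from the combinatorial heart $(i)\Leftrightarrow(ii)$.

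For $(ii)\Rightarrow(iii)$ I would use the representation $\M^{-p}=\Gamma(p)^{-1}\int_0^\infty t^{p-1}e^{-t\M}\,dt$, valid for $\M=\A+\lambda\B$ positive definite. Since $\Tr e^{-t(\A+\lambda\B)}=\Tr\exp((-t\A)-\lambda(t\B))$ with $-t\A$ hermitian and $t\B$ positive, condition $(ii)$ writes it as $\int_0^\infty e^{-\lambda u}\,d\mu_t(u)$ with $\mu_t\ge0$; inserting this and exchanging the (nonnegative) integrals by Tonelli exhibits $\Tr(\A+\lambda\B)^{-p}$ as the Laplace transform of the positive measure $\Gamma(p)^{-1}\int_0^\infty t^{p-1}\,d\mu_t\,dt$, whose total mass $\Tr\A^{-p}$ is finite. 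For the converse $(iii)\Rightarrow(ii)$ I would first shift $\A$ by a multiple of the identity (which only multiplies $\Tr\exp(\A-\lambda\B)$ by a positive constant) so that $\M_\lambda:=\lambda\B+(c\id-\A)$ is positive definite for every $\lambda\ge0$, and then use $e^{-\M_\lambda}=\lim_{p\to\infty}(\id+\M_\lambda/p)^{-p}$. Each approximant $\Tr(\A_p+\lambda\B_p)^{-p}$, with $\A_p=\id+(c\id-\A)/p$ positive definite and $\B_p=\B/p$ positive, is a Laplace transform of a positive measure by $(iii)$, and these have uniformly bounded mass; the continuity theorem for Laplace transforms then passes the positive-measure property to the (continuous) limit. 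This gives $(ii)\Leftrightarrow(iii)$.

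The substance is $(i)\Leftrightarrow(ii)$. The bridge is the Taylor expansion at $\lambda=0$: writing $g(\lambda)=\Tr\exp(\A-\lambda\B)=\sum_{n\ge0}(-\lambda)^n c_n/n!$, a Duhamel (iterated-simplex) expansion of the derivatives, together with the Dirichlet integral over the simplex, identifies
\[
c_n=(-1)^n g^{(n)}(0)=n!\sum_{m\ge0}\frac{S_{m,n}}{(m+n)!},
\]
where $S_{m,n}$ is exactly the coefficient of $\lambda^n$ in $\Tr(\A+\lambda\B)^{m+n}$, i.e.\ the sum of traces of all words with $m$ factors $\A$ and $n$ factors $\B$. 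Since adding $c\id$ to $\A$ only rescales $g$ by $e^c>0$, one may assume $\A$ positive, so every $S_{m,n}$ occurring here is covered by the hypotheses of $(i)$. Because $g$ is entire of exponential type, it is the Laplace transform of a positive measure on $[0,\infty)$ precisely when $\{c_n\}$ is the moment sequence of such a measure, which (the support being bounded) amounts to positive semidefiniteness of the Hankel matrices $(c_{i+j})$ and $(c_{i+j+1})$. Thus $(ii)$ is equivalent to Hankel positivity of $\{c_n(\A,\B)\}$ for all positive $\A,\B$, and the whole problem is to match this against the coefficientwise positivity in $(i)$.

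The main obstacle is precisely this last matching, in both directions. A single scaling such as $\A\mapsto t\A$ only shows that the power series $\sum_m x^m S_{m,n}/(m+n)!$ is nonnegative for all real $x$, which is far from forcing each $S_{m,n}\ge0$; likewise $c_n\ge0$ is much weaker than Hankel positivity. The mechanism I would pursue is to let the universal quantification over all matrices do the work: to prove $(i)\Rightarrow(ii)$, I would absorb the Hankel test vector $\xi=(\xi_0,\dots,\xi_L)$ into the matrices by passing to enlargements $\A\otimes\id$ and $\B\otimes T$ with an auxiliary (e.g.\ nilpotent shift) operator $T$ chosen so that the Hankel form $\sum_{i,j}\xi_i\xi_j c_{i+j}$ is reproduced, up to the $1/(m+n)!$ weights, as a nonnegative combination of genuine coefficients $S_{m,n}$ of the enlarged pair, which are $\ge0$ by $(i)$; to prove $(ii)\Rightarrow(i)$ I would run the same enlargement in reverse, reading off an individual $S_{m,n}\ge0$ from a suitable moment of the representing measure of an enlarged pair. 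Designing the enlargement so that the infinite weighted sum over $m$ collapses to exactly the desired expression, and controlling the convergence of that sum, is the delicate part and the place where I expect the real work to lie.
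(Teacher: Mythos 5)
Your treatment of $(ii)\Leftrightarrow(iii)$ is correct and is essentially the standard one (the $\Gamma$-integral representation one way, the $(\id+\M/p)^{-p}$ approximation plus the continuity theorem the other). The problems are in the two directions linking $(i)$ to the rest, and both stem from your decision to characterize ``Laplace transform of a positive measure'' by Hankel positivity of the Taylor coefficients at $\lambda=0$. That is the wrong tool here. The right tool is Bernstein's theorem (complete monotonicity), combined with the observation made explicitly in the paper that, because $(ii)$ is quantified over \emph{all} hermitian $\A$, it suffices to check the alternating-sign condition at $\lambda=0$: the $n$-th derivative of $\Tr e^{\A-\lambda\B}$ at $\lambda=\lambda_0$ is the $n$-th derivative at $0$ of $\Tr e^{\A'-\lambda\B}$ with $\A'=\A-\lambda_0\B$ again hermitian. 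With this, $(ii)$ is \emph{equivalent} to $c_n(\A,\B)\ge 0$ for all $n$ and all admissible pairs, and your own formula $c_n=n!\sum_m S_{m,n}/(m+n)!$ finishes $(i)\Rightarrow(ii)$ on the spot. Your ``main obstacle'' --- matching Hankel positivity against coefficientwise positivity --- is an artifact of the setup, not a real difficulty.

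The genuine gap is the converse direction: recovering positivity of an \emph{individual} coefficient $S_{m,n}$ from $(ii)$ or $(iii)$. You correctly note that scaling only gives positivity of the generating series $\sum_m x^m S_{m,n}/(m+n)!$, but your proposed remedy --- tensor enlargements with a nilpotent shift designed to collapse the weighted sum --- is left entirely unexecuted, and you yourself flag it as the place where the real work lies. That work is exactly what Lemma~\ref{L1} (proved for $p\in\N$ in the appendix) supplies: with $\A=\sA^{-1}$ and $\B=\sA^{-1/2}\sB\sA^{-1/2}$ one has
\begin{equation*}
(p+r)\left.\frac{d^r}{d\lambda^r}\Tr\frac{1}{(\sA+\lambda\sB)^{p}}\right|_{\lambda=0}=p\,(-1)^r\left.\frac{d^r}{d\lambda^r}\Tr(\A+\lambda\B)^{p+r}\right|_{\lambda=0}\,,
\end{equation*}
and the right side is $p\,(-1)^r\,r!$ times the single coefficient of $\lambda^r$ in $\Tr(\A+\lambda\B)^{p+r}$. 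Since $(iii)$ (which follows from $(ii)$) forces the left side to have sign $(-1)^r$ --- complete monotonicity of $\Tr(\sA+\lambda\sB)^{-p}$, again checked at $\lambda=0$ by the same shifting remark --- each coefficient is nonnegative; every positive definite $\A$ arises as some $\sA^{-1}$, and merely positive $\A$ follows by continuity. Without this combinatorial identity, which is proved by a word-counting argument using cyclicity of the trace, your cycle of implications does not close, so the proposal as it stands does not prove the theorem.
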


We remark that items (i) and (iii) can be combined to the statement
that $\F_p(\lambda) = \Tr (\A + \lambda\B)^p$ has positive derivatives
when $p$ is a positive integer, and derivatives of alternating sign
when $p$ is negative. 

The positivity property in items (ii) and (iii)
follow from Bernstein's Theorem \cite{don} if the functions have alternating
derivatives for all $\lambda\geq 0$. Since the statement above
involves arbitrary $\A$, it suffices to check the alternating
derivative property at $\lambda = 0$.

\section{Extension to general $p\in \R$}

\begin{thm}\label{T2}
  Item (ii) in Theorem~\ref{T1} has the following consequences. For
  all $\A$ and $\B$ positive, $p\in \R$, we have
\begin{itemize}
\item [a)] For $1\leq r\leq \lceil p\rceil$, \ $\frac{d^r}{d\lambda^r} \F_p(\lambda) \geq 0$ for $\lambda \geq 0$.
\item [b)] For $r \geq \lceil p\rceil$ and $p>0$, \  $(-1)^{r-\lceil p \rceil}\frac{d^r}{d\lambda^r} \F_p(\lambda) \geq 0$ for $\lambda \geq 0$.
\item [c)] For $r\geq 1$ and $p\leq 0$,  $(-1)^{r}\frac{d^r}{d\lambda^r} \F_p(\lambda) \geq 0$ for $\lambda \geq 0$,
\end{itemize}
where $\F_p(\lambda)= \Tr(\A+\lambda \B)^p$. 
\end{thm}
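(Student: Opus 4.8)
The plan is to reduce every sign question to the complete monotonicity supplied by item (ii), using integral representations that express the fractional power $x^p$ through the exponential $e^{-tx}$. Throughout write $M=\A+\lambda\B$, which is positive for $\lambda\ge0$, and $m=\lceil p\rceil$. The one input I will use repeatedly is that, by item (ii) applied to the hermitian matrix $-t\A$ and the positive matrix $t\B$, one has $\Tr e^{-tM}=\int_0^\infty e^{-\lambda s}\,d\mu_t(s)$ with $\mu_t\ge0$ for every $t>0$; hence $\partial_\lambda^r\Tr e^{-tM}=(-1)^r\int_0^\infty s^r e^{-\lambda s}\,d\mu_t(s)$, a function of $\lambda$ whose sign is exactly $(-1)^r$.

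Part (c) is then immediate: for $p=-q<0$ write $\Tr M^{-q}=\frac1{\Gamma(q)}\int_0^\infty t^{q-1}\Tr e^{-tM}\,dt$, which by the above is the Laplace transform of a positive measure in $\lambda$, hence completely monotone, giving $(-1)^r\F_p^{(r)}\ge0$. For part (b) I would use the Hadamard-regularized representation, valid (with routine convergence at $0$ and $\infty$) for non-integer $p\in(m-1,m)$,
\[
\Tr M^p=\frac1{\Gamma(-p)}\int_0^\infty t^{-p-1}\Big(\Tr e^{-tM}-\sum_{k=0}^{m-1}\frac{(-t)^k}{k!}\Tr M^k\Big)\,dt .
\]
When $r\ge m$ the subtracted polynomial, of degree $m-1$ in $\lambda$, is annihilated by $\partial_\lambda^r$, so $\F_p^{(r)}(\lambda)=\frac{(-1)^r}{\Gamma(-p)}\int_0^\infty\!\!\int_0^\infty t^{-p-1}s^r e^{-\lambda s}\,d\mu_t(s)\,dt$. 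Since $\Gamma(-p)$ has sign $(-1)^m$ for $p\in(m-1,m)$, this has sign $(-1)^{r-m}$, which is exactly (b); the integer case follows from Theorem~\ref{T1}(i), the polynomial $\F_p$ having non-negative coefficients.

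For part (a) the same representation shows at once that $\F_p^{(m)}\ge0$ (it is the $r=m$ instance of (b), and in fact $\F_p^{(m)}$ is completely monotone). I would then run a downward induction on $r$: if $\F_p^{(r+1)}\ge0$ on $[0,\infty)$ then $\F_p^{(r)}$ is nondecreasing, so $\F_p^{(r)}(\lambda)\ge\F_p^{(r)}(0)$, and it remains only to prove the boundary positivity $\F_p^{(r)}(0)\ge0$ for $1\le r\le m-1$. This reduces (a) to showing that each Taylor coefficient $\frac1{r!}\F_p^{(r)}(0)$ is non-negative. For $r=1$ this is $p\,\Tr[\A^{(p-1)/2}\B\A^{(p-1)/2}]\ge0$, and for $r=2$ it is $\sum_{i,j}x^p[a_i,a_i,a_j]\,|\B_{ij}|^2\ge0$ by convexity of $x^p$ (here $p>1$), where the $a_i$ are the eigenvalues of $\A$ and $x^p[\cdots]$ denotes a divided difference.

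The hard part is the boundary positivity for $3\le r\le m-1$. Here $\frac1{r!}\F_p^{(r)}(0)$ is a cyclic sum $\sum x^p[a_{i_1},\dots,a_{i_r},a_{i_1}]\,\B_{i_1i_2}\cdots\B_{i_ri_1}$ in which the products of matrix elements of $\B$ are complex, so term-by-term positivity fails and the non-commutativity obstruction reappears. The plan is to recognize this quantity as a superposition---via the Hermite--Genocchi integral for the divided differences together with the factorization $e^{-w\sum_k\theta_k a_{i_k}}=\prod_k e^{-w\theta_k a_{i_k}}$---of the integer-power Taylor coefficients, which are non-negative by Theorem~\ref{T1}(i). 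Controlling the alternating signs produced by this expansion, so that the averaging yields a genuinely non-negative result, is the main obstacle; it is precisely the point at which the full strength of item (ii)/Theorem~\ref{T1}, rather than mere functional-calculus positivity, must be used, since (as a scalar check confirms) the integrands in the representations above are not pointwise sign-definite for $1\le r\le m-1$.
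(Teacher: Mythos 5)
Your parts (c) and (b) are essentially correct. Part (c) is the standard derivation of item (iii) from item (ii) via the Gamma-integral, exactly as in \cite{LS}. For part (b) your Hadamard-subtracted representation does give the sign $(-1)^{r-\lceil p\rceil}$ correctly (the sign of $\Gamma(-p)$ is indeed $(-1)^{\lceil p\rceil}$ on $(\lceil p\rceil-1,\lceil p\rceil)$); you should add a word about why differentiation under the integral is legitimate (reduce to $\A$ positive definite so that $\partial_\lambda^r\Tr e^{-tM}$ decays exponentially in $t$; otherwise the differentiated integrand behaves like $t^{r-p-1}$ at infinity). Note that the paper proves (b) without invoking item (ii) at all, using the representation $x^p=\frac{\sin\pi s}{\pi}\int_0^\infty x^{\lceil p\rceil}(x+t)^{-1}t^{-s}\,dt$, the binomial expansion of $(\A+\lambda\B)^{\lceil p\rceil}$ in powers of $(\A+\lambda\B+t)$, and the resolvent identity, which yields $(-1)^r\Tr\,\partial_\lambda^r(\A+\lambda\B+t)^{-1}\ge0$ by pure functional calculus. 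Your route is valid but uses a stronger hypothesis than necessary.

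Part (a), however, is not proved. Your downward induction correctly reduces the claim to the boundary positivity $\F_p^{(r)}(0)\ge0$ for $1\le r\le\lceil p\rceil-1$ (the simpler reduction, used in the paper, is just that $\A+\lambda_0\B$ is again an arbitrary positive matrix), and your $r=1,2$ cases are fine. But for $3\le r\le\lceil p\rceil-1$ you only sketch a ``plan'' via Hermite--Genocchi averaging of divided differences and you explicitly concede that controlling the resulting signs is an open obstacle. That is precisely the entire content of part (a): as you yourself observe, the integrands in all the natural integral representations fail to be pointwise sign-definite in this range, so no superposition argument of the kind you describe can close the gap without a new idea. The paper's new idea is Lemma~\ref{L1}: the exact identity $(p+r)\,\partial_\lambda^r\Tr(\sA+\lambda\sB)^{-p}|_{\lambda=0}=p\,(-1)^r\,\partial_\lambda^r\Tr(\A+\lambda\B)^{p+r}|_{\lambda=0}$ with $\A=\sA^{-1}$, $\B=\sA^{-1/2}\sB\sA^{-1/2}$, proved for integer $p$ by a cyclicity/counting argument and extended to all complex $p$ by Carlson's theorem. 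Applying it with $p$ replaced by $p-r$ converts $\F_p^{(r)}(0)$ into a positive multiple of $(-1)^r\partial_\lambda^r\Tr(\sA+\lambda\sB)^{r-p}|_{\lambda=0}$, whose sign is then supplied by item (iii) (i.e.\ by your part (c)) since $r-p<0$. Without this duality identity, or something equivalent, your argument establishes (a) only for $r\le2$ and $r=\lceil p\rceil$.
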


We remark that item c) follows directly follows directly from item (iii) in Theorem~\ref{T1}, we included it in Theorem~\ref{T2} for completeness. Our proof of item b) does not require item (ii), in fact, and we shall give that first.

\begin{proof}[Proof of Theorem~\ref{T2}(b)]
Let $s= \lceil p \rceil - p$. We can assume that $s>0$. We start with the integral representation
\begin{equation}\label{intrep}
 (\A+\lambda \B)^p =  \frac{\sin(\pi s)}{\pi} \int_0^\infty \frac{ (\A+ \lambda \B)^{\lceil p\rceil}}{\A+ \lambda \B + t} \,t^{-s}\, dt\,.
\end{equation}
Using the binomial theorem, we have
$$
(\A+ \lambda \B)^{\lceil p\rceil} = \sum_{j=0}^{\lceil p \rceil} \binom{\lceil p \rceil}{j} (-t)^{j} (\A + \lambda \B + t)^{\lceil p \rceil - j}
$$
for $t>0$. In particular, since $r \geq \lceil p \rceil$, only the term with $j=\lceil p\rceil$ contributes to the $r^{\rm th}$ derivative of the integrand in (\ref{intrep}), i.e.,
$$
\frac{d^r}{d\lambda^r} \,\frac{ (\A+ \lambda \B)^{\lceil p\rceil}}{\A+ \lambda \B + t} = \frac{d^r}{d\lambda^r} (-t)^{\lceil p \rceil} \frac 1{\A + \lambda \B + t}\,.
$$
Hence
$$
\frac{d^r}{d\lambda^r}\, \Tr (\A + \lambda \B) ^ p = (-1)^{\lceil p \rceil}  \frac{\sin(\pi s)}{\pi}  \int_0^\infty \, \Tr\,  \frac{d^r}{d\lambda^r}\, \frac 1{\A + \lambda \B + t}\, t^{p} \, dt\,.
$$
Using the resolvent identity, we have
$$
 \frac{d^r}{d\lambda^r}\, \frac 1{\A + \lambda \B + t} = \frac {(-1)^r}{\A + \lambda \B + t} \left( \B \frac 1{\A + \lambda \B + t} \right)^r\,,
$$
from which we easily conclude that
$$
(-1)^r\,  \Tr\, \frac{d^r}{d\lambda^r}\, \frac 1{\A + \lambda \B + t} \geq 0\,,
$$
which completes the proof.
\end{proof}

For the proof of part a) of Theorem~\ref{T2}, we shall need the following lemma.

\begin{lem}\label{L1}
Let $\sA$ and $\sB$ be hermitian $n\times n$ matrices over $\C$,
with $\sA$ positive definite. Define $\A=\sA^{-1}$
and $\B=\sA^{-1/2}\sB \sA^{-1/2}$, and let $\lambda\in\R$. For all
$p \in \C$ and $r\in \N$
\begin{equation}\label{equiv}
(p+r) \left.\frac {d^r}{d\lambda^r} \Tr \frac 1{\left(\sA+\lambda
\sB\right)^{p}}\right|_{\lambda=0}= p (-1)^r \left.
\frac {d^r}{d\lambda^r} \Tr \left(\A+\lambda
\B\right)^{p+r}\right|_{\lambda=0} \ .
\end{equation}
\end{lem}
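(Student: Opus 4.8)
The plan is to compute both $r$-th derivatives at $\lambda=0$ through the Cauchy integral representation for matrix powers together with a resolvent expansion, and then to connect the two resulting contour integrals by the substitution $z\mapsto 1/z$, which is exactly the map that turns $\sA$ into $\A=\sA^{-1}$. Writing $X^q=\frac{1}{2\pi i}\oint z^q (z-X)^{-1}\,dz$ with the contour encircling the (positive) spectrum, and expanding $(z-\A-\lambda\B)^{-1}$ in powers of $\lambda$, the coefficient of $\lambda^r$ is $(z-\A)^{-1}\bigl(\B(z-\A)^{-1}\bigr)^r$. Since $\frac{d^r}{d\lambda^r}|_{\lambda=0}$ equals $r!$ times that coefficient (and the $r!$ cancels on both sides of \eqref{equiv}), I will set $T(w)=(w-\A)^{-1}$ and record
$$
R:=\frac1{r!}\frac{d^r}{d\lambda^r}\Tr(\A+\lambda\B)^{p+r}\Big|_{\lambda=0}=\frac{1}{2\pi i}\oint w^{p+r}\,\Tr\bigl[T(w)(\B T(w))^r\bigr]\,dw .
$$
For the left-hand side I first use the algebraic factorization $\sA+\lambda\sB=\sA^{1/2}(\id+\lambda\B)\sA^{1/2}$, which lets me rewrite the resolvent of $\sA+\lambda\sB$ in terms of $\A$ and $\B$ alone, giving, with $S(z)=(z\A-\id)^{-1}$,
$$
L:=\frac1{r!}\frac{d^r}{d\lambda^r}\Tr(\sA+\lambda\sB)^{-p}\Big|_{\lambda=0}=\frac{1}{2\pi i}\oint z^{-p}\,\Tr\bigl[\A\,S(z)(\B S(z))^r\bigr]\,dz .
$$

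Next I would invoke the two resolvent identities $S(z)=-\tfrac1z\,T(1/z)$ and $\A\,T(w)=-\id+w\,T(w)$, and substitute $w=1/z$. The first identity rescales the string by $(-1/z)^{r+1}$, while the second splits $\A T(1/z)$ into the term $-\id$ and the term $w\,T$. After the substitution this expresses $L$ as a combination of $(-1)^r R$ and one extra integral, namely $P:=\frac{1}{2\pi i}\oint w^{p+r-1}\Tr\bigl[(\B T(w))^r\bigr]\,dw$, so that $L=(-1)^r(R-P)$.

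To close the argument I relate $P$ to $R$ by integration by parts. Using $\frac{d}{dw}T(w)=-T(w)^2$ and cyclicity of the trace one gets $\frac{d}{dw}\Tr[(\B T(w))^r]=-r\,\Tr[T(w)(\B T(w))^r]$; inserting this into $R$ and integrating by parts around the closed contour (the boundary term vanishing) yields $R=\frac{p+r}{r}P$. Substituting $P=\frac{r}{p+r}R$ into $L=(-1)^r(R-P)$ gives $L=(-1)^r\frac{p}{p+r}R$, and multiplying by $p+r$ produces $(p+r)L=p(-1)^rR$, which is precisely \eqref{equiv}.

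I expect the main obstacle to be the analytic bookkeeping rather than the algebra: one must fix the principal branch of $z^{-p}$ and $w^{p+r}$ for general complex $p$, choose a contour around the positive spectrum (legitimate since $\sA$, hence $\A=\sA^{-1}$, is positive definite, so the matrices stay positive definite for $\lambda$ near $0$), and track the orientation reversal and the Jacobian $dz=-w^{-2}\,dw$ under $w=1/z$ so that the signs and the vanishing of boundary terms are correctly justified. As a consistency check I would verify the case $r=1$ directly, where $L=-p\,\Tr\sA^{-p-1}\sB$ and $R=(p+1)\Tr\A^{p}\B=(p+1)\Tr\sA^{-p-1}\sB$ indeed satisfy $L=-\frac{p}{p+1}R$.
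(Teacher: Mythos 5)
Your argument is correct, and it takes a genuinely different route from the paper. The paper proves (\ref{equiv}) first for $p\in\N$ by a purely combinatorial computation: both $r$-th derivatives at $\lambda=0$ are expanded as sums of traces of words in the two letters, and cyclicity of the trace together with a count of the $p+r$ cyclic positions of a distinguished factor $\A$ (against the $p!$ versus $(p-1)!$ normalizations) produces exactly the ratio $p/(p+r)$; the extension to all $p\in\C$ is then obtained from Carlson's theorem, using that both sides are entire in $p$, agree on $\N$, and have controlled growth in the right half-plane. Your contour-integral proof handles all $p\in\C$ in one stroke and needs neither the word combinatorics nor Carlson's theorem: the factor $(-1)^r$ and the decomposition $L=(-1)^r(R-P)$ come from the inversion $z\mapsto 1/z$, which is precisely what converts $\sA$ into $\A=\sA^{-1}$, while the crucial ratio comes from the integration by parts $rR=(p+r)P$ on the closed contour --- the analytic counterpart of the paper's counting of cyclic positions. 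The price is the analytic bookkeeping you already flag (principal branches of $z^{-p}$ and $w^{p+r}$, a contour encircling the positive spectrum and avoiding the cut, the winding numbers of the image contour under inversion, and single-valuedness of $w^{p+r}$ along the contour so that the boundary term in the integration by parts genuinely vanishes); all of this is legitimate because $\sA$, hence $\A$ and $\A+\lambda\B$ for small $|\lambda|$, is positive definite. Two minor remarks: it is cleaner to keep the identity in the form $rR=(p+r)P$ rather than $P=\tfrac{r}{p+r}R$, so that the case $p+r=0$ needs no separate limiting argument; and your $r=1$ consistency check does match both derivations.
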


\begin{proof}
The proof of (\ref{equiv}) for $p\in\N$ was given in \cite{LS}; we include it verbatim in the appendix for completeness. 

Both sides of (\ref{equiv}) are entire functions of $p$. Let $f(p)$
denote the left side minus the right side. We have just noted that
$f(p)=0$ for $p\in \N$. Moreover, we can find an $a>0$ such that the
function $p\mapsto f(p) e^{-ap}$ is bounded for $\Re p \geq 0$. It
then follows from Carlson's theorem (see \cite{hardy}) that $f$ is
identically zero in the half-space $\Re p \geq 0$, and hence for all
$p\in \C$.
\end{proof}

\begin{proof}[Proof of Theorem~\ref{T2}(b)]
  As remarked after Theorem~\ref{T1}, it is sufficient to prove the
  statement for $\lambda = 0$.  We can assume that $p>r$, the statement
  is trivial for $p=r$. From the identity (\ref{equiv}) with $p$
  replaced by $p-r$, we have
\begin{equation}\label{aa}
\left. \frac{d^r}{d\lambda^r} \Tr (\A+ \lambda \B)^p \right|_{\lambda = 0} = (-1)^r \frac{p}{p-r} \left. \frac{d^r}{d\lambda^r} \Tr (\sA + \lambda \sB)^{r-p} \right|_{\lambda=0}\,.
\end{equation}
By item (iii) of Theorem~\ref{T1}, the function $\lambda \mapsto \Tr (\sA + \lambda \sB)^{r-p}$ has alternating derivatives for $p>r$, and hence the right side of (\ref{aa}) is positive. 
\end{proof}

\section{Extension to elementary symmetric functions of eigenvalues}

We now return to the polynomials $\F_p(\lambda) = \Tr (\A + \lambda
\B)^p$ for {\em positive integer} $p$. The coefficient of $\lambda^k$
in this polynomial is the trace of a sum of $p$-letter words in two
letters, with $\A$ appearing $p-k$ times, and $\B$ appearing $k$
times. It is known that the trace of an individual words need
not be positive \cite{johnson}, but the sum of the traces is, according to
Theorem~\ref{T1} and the BMV conjecture. 

Instead of traces, which involve sums of eigenvalues, let us consider
determinants, which are the products of all the eigenvalues. It is
clear that the determinant of the sum of all words for given $p$ and
$k$ need not be positive, as the example $p=2$ and $k=1$ shows;
namely, while $\Tr (\A\B + \B\A)$ is positive, the determinant $\det
(\A \B + \B\A)$ need not be \cite{ball}. On the other hand, each
individual determinant is clearly positive, since it is equal to
$(\det \A)^{p-k} (\det \B)^k$. This suggests that general elementary
symmetric functions of the eigenvalues of the sum of all words need
not be positive, while the sum of the elementary symmetric functions
of the eigenvalues of the individual words should be positive. We prove that here.

The elementary symmetric functions of the eigenvalues,
of which the trace and the determinant are two special cases, are given by 
$$
e_j(\lambda_1,\dots,\lambda_n) = \sum_{1\leq i_1 < i_2 < \dots < i_j \leq n} \prod_{k=1}^j \lambda_{i_k}
$$
for $1\leq j \leq k$. This number $e_j$ is also equal to the sum of
the principal subdeterminants of order $j$ of a matrix $\M$ with
eigenvalues $\lambda_1,\dots,\lambda_n$, and we shall also write
$e_j(\M)$ for short. Another way to think of $e_j(\M)$ is as the trace of the
$j^{\rm th}$ adjugate of $\M$, which is the $j$-fold
anti-symmetric tensor product of $\M$ with itself, $\M\wedge \cdots\wedge \M$.

\begin{thm}\label{T3}
Item (i) in Theorem~1, assumed to hold for all $n\in \N$, has the following consequence. For all $\A$ and $\B$ positive, $1\leq k \leq  p$ and $1\leq j \leq n$, 
$$
\sum_{i=1}^{\binom{p}{k}} e_j(\W_i) \geq 0
$$
where the  $\W_i$ denote all words of length $p$ with $k$ letters $\B$ and $p-k$ letters $\A$.   
\end{thm}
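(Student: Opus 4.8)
The plan is to deduce the general statement from its $j=1$ case, which is exactly item~(i) of Theorem~\ref{T1} (since $e_1 = \Tr$), by passing to the $j$-th exterior power. The starting point is the identity recalled just above the theorem, $e_j(\M) = \Tr\bigl(\wedge^j \M\bigr)$, where $\wedge^j\M$ is the $j$-fold antisymmetric tensor power of $\M$ acting on the $\binom{n}{j}$-dimensional space $\wedge^j\C^n$.

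The crucial structural input is that $\wedge^j$ is multiplicative: $\wedge^j(\mathsf{X}\mathsf{Y}) = (\wedge^j\mathsf{X})(\wedge^j\mathsf{Y})$ for all matrices $\mathsf{X},\mathsf{Y}$. Hence if a word $\W_i = \M_1\M_2\cdots\M_p$ has each $\M_l\in\{\A,\B\}$, then $\wedge^j\W_i = (\wedge^j\M_1)\cdots(\wedge^j\M_p)$ is the word with the \emph{same} pattern in the two matrices $\tilde\A := \wedge^j\A$ and $\tilde\B := \wedge^j\B$. As $i$ runs over all words with $k$ letters $\B$ and $p-k$ letters $\A$, the images $\wedge^j\W_i$ run over all words with $k$ letters $\tilde\B$ and $p-k$ letters $\tilde\A$. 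Therefore
\[
\sum_{i=1}^{\binom{p}{k}} e_j(\W_i) = \sum_{i=1}^{\binom{p}{k}} \Tr\bigl(\wedge^j\W_i\bigr),
\]
and the right-hand side is precisely the coefficient of $\la^k$ in the polynomial $\la\mapsto\Tr(\tilde\A+\la\tilde\B)^p$.

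It remains to observe that $\tilde\A$ and $\tilde\B$ are again positive. Since $\A,\B\geq 0$, the eigenvalues of $\tilde\A$ (resp.\ $\tilde\B$) are $j$-fold products of eigenvalues of $\A$ (resp.\ $\B$), hence non-negative, and $(\wedge^j\A)^* = \wedge^j(\A^*) = \wedge^j\A$ shows they are Hermitian on $\wedge^j\C^n$. Thus $\tilde\A,\tilde\B$ are positive $\binom{n}{j}\times\binom{n}{j}$ matrices, and applying item~(i) of Theorem~\ref{T1} on the space $\wedge^j\C^n$ forces the coefficient of $\la^k$ to be non-negative. This is exactly the asserted inequality, and it explains why the hypothesis must be that item~(i) holds for \emph{all} $n\in\N$: the reduction raises the dimension from $n$ to $\binom{n}{j}$.

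The essential point, and the only thing that really needs to be checked, is that the elementary symmetric function linearizes through the exterior-power functor, so that a \emph{sum over words} is carried to a sum over words and the whole problem collapses onto the trace case in a larger matrix algebra. The multiplicativity of $\wedge^j$ and the positivity of $\wedge^j$ of a positive matrix are both standard, so I do not anticipate a genuine obstacle; the work is entirely in recognizing the right reduction.
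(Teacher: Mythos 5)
Your proof is correct and follows the same route as the paper: both apply item (i) of Theorem~\ref{T1} to the $j$-fold antisymmetric tensor powers $\wedge^j\A$ and $\wedge^j\B$, using the multiplicativity of $\wedge^j$ to carry the sum over words into the trace statement in dimension $\binom{n}{j}$. Your write-up simply makes explicit the details (positivity of the exterior powers, the identification of the word sum with the coefficient of $\lambda^k$) that the paper leaves to the reader.
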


As remarked above, in general it is false that $e_j( \sum_i \W_i) \geq 0$, except for $j=1$, where $\sum_i e_j(\W_i) = e_j(\sum_i \W_i)$.

\begin{proof}
  We apply item (i) in Theorem~\ref{T1} to the $j$-fold antisymmetric
  tensor products of $\A$ and $\B$, respectively. Since the tensor
  product of a product of matrices equals the product of the
  individual tensor products, the theorem follows immediately.
\end{proof}

In a similar fashion, we can prove the following.

\begin{thm}\label{T4}
Item (ii) in Theorem~1, assumed to hold for all $n\in \N$, has the following consequences. 
\begin{itemize}
\item [a)]
For all hermitian $\A$ and positive $\B$, $k\geq 1$ and $1\leq j \leq n$, 
$$
\int\limits_{s_i\geq 0,\, \sum_{i=1}^{k+1} s_i = 1} e_j\left( e^{s_1 A} B e^{s_2 A} B \dots B e^{s_{k+1} A} \right) \, ds_1\cdots ds_{k+1} \geq 0
$$
\item [b)] For all hermitian $\A$ and positive $\B$, and $1\leq j \leq n$, 
$$
\lambda \mapsto e_j\left( e^{\A - \lambda \B} \right)
$$
is the Laplace transform of a positive measure supported in $[0,\infty)$.
\end{itemize}
\end{thm}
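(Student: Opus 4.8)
The plan is to mirror exactly the proof strategy already used for Theorem~\ref{T3}, namely to lift Theorem~\ref{T4} to the $j$-fold antisymmetric tensor product and then invoke the corresponding statement of Theorem~\ref{T1} applied in that larger space. The key algebraic fact, already noted in the excerpt, is that the antisymmetric tensor product is multiplicative: $(\M_1 \M_2)^{\wedge j} = \M_1^{\wedge j} \M_2^{\wedge j}$, where I write $\M^{\wedge j}$ for the $j$-fold antisymmetric tensor power acting on $\bigwedge^j \C^n$. Since $e_j(\M) = \Tr \M^{\wedge j}$, every elementary symmetric function of eigenvalues is converted into an honest trace on a space of dimension $\binom{n}{j}$, which is the setting where Theorem~\ref{T1} directly applies.

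For part b), I would first observe that taking the antisymmetric power commutes with the matrix exponential in the following sense: if $\A$ is hermitian then $(e^{\A})^{\wedge j} = e^{d\A}$, where $d\A = \A \otimes \id \otimes \cdots + \cdots + \id \otimes \cdots \otimes \A$ is the induced (derivation) action of $\A$ on $\bigwedge^j \C^n$. The crucial point is that this map $\A \mapsto d\A$ is linear, so $e^{\A - \lambda \B}$ lifts to $e^{d\A - \lambda\, d\B}$ on the antisymmetric space. Moreover $d\A$ is hermitian and $d\B$ is positive whenever $\B$ is positive, because the induced action preserves positivity of the spectrum in the sense required. Therefore
\begin{equation}\label{eq:wedgeLaplace}
\lambda \mapsto e_j\!\left( e^{\A - \lambda \B} \right) = \Tr_{\bigwedge^j} e^{\,d\A - \lambda\, d\B}
\end{equation}
is precisely an instance of the function appearing in item (ii) of Theorem~\ref{T1}, applied with $n$ replaced by $\binom{n}{j}$ and with the hermitian pair $(d\A, d\B)$. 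Since item (ii) is assumed to hold for all dimensions, the right side of \eqref{eq:wedgeLaplace} is the Laplace transform of a positive measure supported in $[0,\infty)$, which is exactly the assertion of part b).

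For part a) I would derive the integral representation of the coefficients in the Laplace-transform expansion by differentiating the Duhamel (Dyson) series for $e^{\A - \lambda \B}$ in $\lambda$, or equivalently by expanding $\Tr_{\bigwedge^j} e^{\,d\A - \lambda\, d\B}$ in powers of $\lambda$ and identifying the $k$-th coefficient. The Dyson expansion gives the standard multiple-integral formula
\begin{equation}\label{eq:dyson}
\int\limits_{s_i\geq 0,\, \sum_{i=1}^{k+1} s_i = 1} \Tr_{\bigwedge^j}\!\left( e^{s_1\, d\A}\, (d\B)\, e^{s_2\, d\A} \cdots (d\B)\, e^{s_{k+1}\, d\A} \right) ds_1\cdots ds_{k+1},
\end{equation}
and again using multiplicativity of the antisymmetric power this equals the integral of $e_j(e^{s_1 \A} \B e^{s_2 \A} \cdots \B e^{s_{k+1}\A})$ over the simplex. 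Part a) then says precisely that this coefficient is non-negative, which it must be because, by part b), the function in \eqref{eq:wedgeLaplace} is a Laplace transform of a positive measure and hence all the coefficients in its small-$\lambda$ moment expansion have the appropriate sign; expanding the integrand over the simplex matches the Dyson coefficient term by term.

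The main obstacle I anticipate is not conceptual but bookkeeping: one must verify carefully that $\A \mapsto d\A$ sends the exponential of $\A - \lambda\B$ to the exponential of $d\A - \lambda\, d\B$ with the correct linear (derivation) action, rather than to the tensor power $(e^{\A-\lambda\B})^{\otimes j}$ restricted to the antisymmetric subspace acting by ordinary multiplication; these two descriptions must be reconciled. Equally, one must confirm that $d\B \geq 0$ when $\B \geq 0$ and that $d\A$ remains hermitian, so that the hypotheses of item (ii) genuinely hold for the pair $(d\A, d\B)$. Once this correspondence between the antisymmetric exponential and the induced hermitian pair is pinned down, both parts follow immediately from Theorem~\ref{T1}(ii) exactly as Theorem~\ref{T3} followed from Theorem~\ref{T1}(i), and the simplex integral in part a) is just the explicit Duhamel form of the $k$-th Taylor coefficient that part b) controls.
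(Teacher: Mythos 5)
Your part b) is essentially the paper's argument: lift $\A$ and $\B$ both to their derivation actions $\alpha$ and $\gamma$ on $\bigwedge^j\C^n$, note that $e^{\alpha-\lambda\gamma}=\left(e^{\A-\lambda\B}\right)^{\wedge j}$ so that $e_j(e^{\A-\lambda\B})=\Tr e^{\alpha-\lambda\gamma}$, check that $\gamma\geq 0$, and apply item (ii) of Theorem~\ref{T1} in dimension $\binom{n}{j}$. That part is fine.

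Part a) as written has a genuine gap. You lift $\B$ to its derivation action $d\B=\B\wedge\id\wedge\cdots\wedge\id+\cdots+\id\wedge\cdots\wedge\B$ and claim that multiplicativity of the antisymmetric power turns $\Tr\left(e^{s_1 d\A}\,(d\B)\,e^{s_2 d\A}\cdots (d\B)\,e^{s_{k+1}d\A}\right)$ into $e_j\left(e^{s_1\A}\B e^{s_2\A}\cdots\B e^{s_{k+1}\A}\right)$. That identity is false. Multiplicativity gives $(\M_1\M_2)^{\wedge j}=\M_1^{\wedge j}\M_2^{\wedge j}$ and $e^{s\,d\A}=(e^{s\A})^{\wedge j}$, but the factor you need between the exponentials in order to reassemble the whole word inside a single $(\cdot)^{\wedge j}$ is the multiplicative power $\B^{\wedge j}=\B\wedge\cdots\wedge\B$, not the derivation $d\B$; these two lifts of $\B$ are different operators on $\bigwedge^j\C^n$. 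Already for $j=n$ the discrepancy is visible: your integrand reduces to $e^{\Tr\A}(\Tr\B)^k$, whereas the quantity in part a) is $e^{\Tr\A}(\det\B)^k$. The repair --- which is the paper's actual proof of a) --- is to use an \emph{asymmetric} lift: $\A\mapsto\alpha$ (derivation action, so that $e^{\alpha}=e^{\A}\wedge\cdots\wedge e^{\A}$) but $\B\mapsto\beta=\B\wedge\cdots\wedge\B$. Then $\beta\geq 0$ since $\B\geq 0$, item (ii) applies to the pair $(\alpha,\beta)$, and the $k$-th $\lambda$-derivative of $e^{\alpha-\lambda\beta}$ at $\lambda=0$ is $(-1)^k$ times a positive multiple of the simplex integral of $e^{s_1\alpha}\beta\cdots\beta e^{s_{k+1}\alpha}=\left(e^{s_1\A}\B\cdots\B e^{s_{k+1}\A}\right)^{\wedge j}$, whose trace is exactly $e_j$ of the word; the sign then follows from the alternating derivatives of the Laplace transform, as you intended. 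So the conceptual frame of your proposal is right, but the single uniform lift $\B\mapsto d\B$ cannot deliver part a); you need both lifts, one for each part.
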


\begin{proof} Item a) follows by applying item (ii) with $\A$ replaced by
$$
\alpha = \left(\A \wedge \id \wedge \cdots \wedge \id \right) + \left(\id \wedge \A \wedge \id \cdots \wedge \id\right)  + \dots + \left( \id \wedge \cdots \wedge \id \wedge \A\right)
$$
and $\B$ replaced by $\beta = \B\wedge \B\wedge \cdots \wedge \B$. Note that $e^\alpha = e^\A \wedge e^\A \wedge \cdots\wedge e^\A$. The $k^{\rm th}$ derivative of $e^{\alpha-\lambda \beta}$ with respect to $\lambda$ is equal to 
$$
(-1)^k \int\limits_{s_i\geq 0,\, \sum_{i=1}^{k+1} s_i = 1} e^{s_1 \alpha} \beta e^{s_2 \alpha} \beta \dots \beta e^{s_{k+1} \alpha}  \, ds_1\cdots ds_{k+1} 
$$
and hence the statements follows.

To obtain item b), we replace $\A$ by $\alpha$ and $\B$ by 
$$
\gamma = \left(\B \wedge \id \wedge \cdots \wedge \id\right) + \left(\id \wedge \B \wedge \id \cdots \wedge \id\right)  + \dots + \left( \id \wedge \cdots \wedge \id \wedge \B\right)\,.
$$
Then $e^{\alpha - \lambda \gamma} = e^{\A-\lambda \B} \wedge e^{\A - \lambda \B} \wedge \cdots \wedge e^{\A-\lambda \B}$. 
\end{proof}

The case $j=n$ in item b) of Theorem~\ref{T4} follows immediately from the fact that 
$$
\det e^{\A - \lambda \B} = e^{\Tr \A - \lambda \Tr \B}\,,
$$
and holds even without the assumption of the Theorem.

Note that 
$$
e_2 ( \M ) = \frac 12 \left( \left( \Tr \M \right)^2 - \Tr \M^2 \right)\,.
$$
For $\M  =e^{\A - \lambda \B}$, both $(\Tr \M)^2$ and $\Tr \M^2$ are the Laplace transform of a positive measure. It is remarkable that also their difference has this property! Similar conclusions can be drawn for general $j\geq 2$.

\appendix

\section{Appendix: Proof of Lemma~\ref{L1} for $p\in\N$}

By induction it is easy to show that
\begin{equation}
\frac {d^r}{d\lambda^r} \left(\A+\lambda \B\right)^{p+r}=r!
\sum_{\substack{0\leq i_1,\dots,i_{r+1}\leq p \\ \sum_j i_j=p }}
(\A+\lambda \B)^{i_1} \B \cdots \B (\A+\lambda\B)^{i_{r+1}} \ .
\end{equation}
By taking the trace at $\lambda=0$ we obtain
\begin{equation}
I_1\equiv \left. \frac {d^r}{d\lambda^r} \Tr \left(\A+\lambda
\B\right)^{p+r}\right|_{\lambda=0}=r! \sum_{\substack{0\leq
i_1,\dots,i_{r+1}\leq p
\\ \sum_j i_j=p }} \Tr \A^{i_1} \B \cdots \B
\A^{i_{r+1}}   \ .
\end{equation}
Moreover, by similar arguments,
\begin{equation}
\frac {d^r}{d\lambda^r} \frac 1{\left(\sA+\lambda
\sB\right)^{p}}=(-1)^r r! \sum_{\substack{1\leq
i_1,\dots,i_{r+1}\leq p
\\ \sum_j i_j=p+r }} \frac 1{(\sA+\lambda \sB)^{i_1}} \sB \cdots \sB
\frac 1{(\sA+\lambda\sB)^{i_{r+1}}} \ .
\end{equation}
By taking the trace at $\lambda=0$ and using cyclicity, we get
\begin{equation}
I_2\equiv\left.\frac {d^r}{d\lambda^r} \Tr \frac
1{\left(\sA+\lambda \sB\right)^{p}}\right|_{\lambda=0}=(-1)^r r!
\sum_{\substack{0\leq i_1,\dots,i_{r+1}\leq p-1
\\ \sum_j i_j=p-1 }} \Tr \A\, \A^{i_1} \B \cdots \B
\A^{i_{r+1}}\, \ .
\end{equation}
We have to show that
\begin{equation}
I_2=\frac{p}{p+r} (-1)^r I_1 \ .
\end{equation}
To see this we rewrite $I_1$ in the following way. Define $p+r$
matrices $\M_j$ by
\begin{equation}
\M_j=\left\{\begin{array}{ll}\B & {\rm for\ }1\leq j\leq r \\ \A &
{\rm for\ } r+1\leq j\leq r+p\ .\end{array}\right.
\end{equation}
Let $\s_n$ denote the permutation group. Then
\begin{equation}
I_1=\frac 1{p!} \sum_{\pi\in\s_{p+r}} \Tr \prod_{j=1}^{p+r}
\M_{\pi(j)} \ .
\end{equation}
Because of the cyclicity of the trace we can always arrange the
product such that $\M_{p+r}$ has the first position in the trace.
Since there are $p+r$ possible locations for $\M_{p+r}$ to appear in the
product above, and all products are equally weighted,  we get
\begin{equation}
I_1=\frac {p+r}{p!} \sum_{\pi\in\s_{p+r-1}} \Tr \A
\prod_{j=1}^{p+r-1} \M_{\pi(j)} \ .
\end{equation}
On the other hand,
\begin{equation}
I_2=(-1)^r \frac 1{(p-1)!} \sum_{\pi\in\s_{p+r-1}} \Tr \A
\prod_{j=1}^{p+r-1} \M_{\pi(j)}\ ,
\end{equation}
so we arrive at the desired equality.

\bigskip
\noindent {\it Acknowledgments.} We are grateful to P. Landweber for
valuable discussion on this topic, especially about elementary
symmetric functions. Partial financial support by U.S. NSF grant PHY-0965859
(E.H.L.) and the NSERC (R.S.) is gratefully acknowledged.

\end{document}